\documentclass[10pt,conference]{IEEEtran}

\usepackage{url,amssymb,amsmath,boldmath}

\usepackage{cite}

\newtheorem{theorem}{Theorem}
\newtheorem{corollary}[theorem]{Corollary}
\newtheorem{algorithm}[theorem]{Algorithm}

\def\F{{\mathbb{F}}}
\def\wgt{\mathop{\rm wgt}}
\def\MAGMA{{\sf MAGMA}}

\begin{document}
\pagestyle{plain}

\title{Computing Extensions of Linear Codes}

\author{\authorblockN{Markus Grassl}
\authorblockA{
Institut f\"ur Algorithmen und Kognitive Systeme\\
Fakult\"at f\"ur Informatik, Universit\"at Karlsruhe (TH)\\
Am Fasanengarten 5, 76\,128 Karlsruhe, Germany\\
Email: grassl@ira.uka.de}
}

%

\maketitle

\begin{abstract}
This paper deals with the problem of increasing the minimum distance
of a linear code by adding one or more columns to the generator
matrix.  Several methods to compute extensions of linear codes are
presented.  Many codes improving the previously known lower bounds on
the minimum distance have been found.
\end{abstract}

\section{Introduction}
In this paper we consider the question when a linear code
$C=[n,k,d]_q$ over $\F_q$ of length $n$, dimension $k$, and minimum
distance $d$ can be extended to a code $C'=[n+1,k,d+1]_q$.  It is a
well known fact in coding theory that every binary linear code
$C=[n,k,d]_2$ whose minimum weight $d$ is odd can be extended to a
code $[n+1,k,d+1]_2$ by adding a single parity check.  This can also
be expressed in terms of Construction~X \cite{SRC72} applied to the
code $C$, its one-codimensional even-weight subcode $C_0$, and the
trivial code $[1,1,1]_2$.  While this result does not have an
immediate generalization to non-binary alphabets, Hill and Lizak
\cite{HiLi95,Hi99} proved the following theorem:
\begin{theorem}
Let $C$ be an $[n,k,d]_q$ code with $\gcd(d,q)=1$ and with all
weights congruent to $0$ or $d$ (modulo $q$). Then $C$ can be extended
to an $[n+1,k,d+1]_q$ code all of whose weights are congruent to $0$
or $d+1$ (modulo $q$).
\end{theorem}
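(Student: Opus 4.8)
\emph{Proof proposal.} The plan is to realize the desired extension as the adjunction of one column to a generator matrix $G$ of $C$. Write the columns of $G$ as points $h_1,\dots,h_n\in\F_q^k$, so that a codeword has the form $xG$ with $j$-th entry $x\cdot h_j$ and $\wgt(xG)=\#\{j:x\cdot h_j\neq 0\}$. Appending a column $a\in\F_q^k$ replaces $xG$ by $(xG,\,x\cdot a)$, whose weight is $\wgt(xG)$ when $x\cdot a=0$ and $\wgt(xG)+1$ otherwise. First I would run the elementary residue analysis: a word of weight $\equiv 0$ must keep its weight (adding $1$ would produce residue $1\notin\{0,d+1\}$, using $d\not\equiv 0$), while a word of weight $\equiv d$ must gain $1$ (to move from $d\notin\{0,d+1\}$ to $d+1$). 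Setting $A:=\{x\in\F_q^k:\wgt(xG)\equiv 0\ (\mathrm{mod}\ q)\}$, this shows that a suitable column exists \emph{iff} $A=a^{\perp}$ for some $a\neq 0$; equivalently, iff $A$ is a hyperplane.

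So the real content is to prove that $A$ is a hyperplane of $\F_q^k$. Since scaling a codeword by a nonzero scalar preserves its weight, $A$ and its complement $B$ are unions of lines through the origin and hence descend to a partition of $\mathrm{PG}(k-1,q)$ into point sets $\bar A,\bar B$. The key step I would carry out is a two-way count, for each $2$-dimensional subspace $\Pi\subseteq\F_q^k$, of $\sum_{z\in\Pi}\wgt(zG)$ modulo $q$. Summing column by column, each $h_j$ contributes either $0$ or $q(q-1)$ (according as the functional $z\mapsto z\cdot h_j$ vanishes on $\Pi$ or is onto $\F_q$), so the total is $\equiv 0\pmod q$. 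Grouping instead by residue gives $\equiv d\,|\Pi\cap B|\pmod q$, and since $\gcd(d,q)=1$ this forces $|\Pi\cap B|\equiv 0\pmod q$. Translated to the projective line $\ell$ determined by $\Pi$, which carries $q+1$ points, we obtain $|\bar B\cap\ell|\in\{0,q\}$; equivalently, \emph{every line meets $\bar A$ in exactly $1$ or in all $q+1$ of its points.}

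The remaining step is purely geometric: a nonempty point set meeting every line in $1$ or in all of its points is a projective subspace, and a subspace meeting every line cannot have projective codimension $\ge 2$ (such a subspace is skew to a line drawn from a complementary subspace). Hence $\bar A$ is a hyperplane, i.e.\ $A=a^{\perp}$ for some $a\neq 0$, and appending $a$ to $G$ yields the extension. The concluding verification is routine: by construction all weights of $C'$ lie in $\{0,d+1\}\pmod q$, and the minimum distance is exactly $d+1$ because the residue-$0$ codewords, being nonzero multiples of $q$ with $q\nmid d$, already have weight $\ge d+1$, while the minimum-weight words (of weight $d$, necessarily in $B$) grow to weight $d+1$.

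I expect the main obstacle to be precisely the passage from ``$A$ is a subspace'' to ``$A$ is a \emph{hyperplane}'': it is here that both hypotheses are indispensable, with $\gcd(d,q)=1$ converting the automatic divisibility $\sum_{z\in\Pi}\wgt(zG)\equiv 0$ into information about $|\Pi\cap B|$, and the two-residue condition making the line-intersection numbers take only the two values that characterize subspaces. Everything else is bookkeeping.
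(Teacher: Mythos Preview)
Your argument is correct and is essentially the classical proof due to Hill: the double count over a $2$-plane $\Pi$ yields $\sum_{z\in\Pi}\wgt(zG)\equiv 0\pmod q$, hence $|\Pi\cap B|\equiv 0\pmod q$, hence $|\bar A\cap\ell|\in\{1,q+1\}$ for every projective line $\ell$, and the tangent--secant characterization forces $\bar A$ to be a hyperplane (the case $\bar A=\mathrm{PG}(k-1,q)$ being excluded by the existence of a word of weight $d\not\equiv 0$). One small point worth making explicit is that $\bar A$ is automatically nonempty because \emph{every} line already meets it; you use this when invoking the subspace characterization.

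However, there is nothing to compare against here: the paper does not give its own proof of this statement. The theorem is quoted, with attribution to Hill and Lizak \cite{HiLi95,Hi99}, purely as background, and the paper immediately moves on to its actual subject (computational criteria based on the set of minimum-weight words). So your proposal is a valid proof of a result that the paper merely cites.
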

In order to apply this theorem, knowledge about the weight spectrum of
the code $C$ is required.  A generalization of this theorem due to
Simonis \cite{Sim00} can be applied when additionally information on
the weight distribution of the code $C$ is available.  The special
cases with $\gcd(q,d)=1$ and in particular ternary codes have been
treated by Maruta \cite{Mar03,Mar04,Mar05}.  However, these results
are of rather theoretical nature and have mainly be used to prove the
non-existence of codes with certain parameters.  The application to a
specific code might be difficult since one has to compute information
on the weight distribution of the code first.

\section{Extension Based on Minimum Weight Codewords}
\subsection{The main criterion}
In the following, we consider the problem to test if a code
$C=[n,k,d]_q$ which is explicitly given by a generator matrix $G$ can
be extended and to compute an extension if it exists.  Based on the
set of all codewords of minimum weight, we get the following criterion
for the extendability of a linear code:
\begin{theorem}\label{theorem:extension}
Let $C=[n,k,d]_q$ be a linear code over $\F_q$ with minimum distance
$d$.  Furthermore, let $G\in\F_q^{k\times n}$ be a generator matrix
for $C$ of full rank. By ${\cal S}_d=\{ \bm{c}\in C|\wgt{\bm{c}}=d\}$
we denote the set of all codewords of minimum weight and by ${\cal
J}_d=\{\bm{v}\in\F_q^k|\wgt(\bm{v}G)=d\}$ we denote the corresponding
information vectors.

The code $C$ can be extended to a code $C'=[n+m,k,d+1]_q$ if and only
if there is a matrix $X\in\F_q^{k\times m}$ such that
\begin{equation}\label{eq:gen_ext}
\sum_{i=1}^k v_i X_i\ne 0\quad\text{for all $\bm{v}\in {\cal J}_d$,}
\end{equation}
where $X_i$ denotes the $i$-th row of the matrix $X$.
\end{theorem}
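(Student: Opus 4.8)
The plan is to realize every length-$(n+m)$ extension of $C$ as the code generated by $G'=[\,G\mid X\,]$ for some $X\in\F_q^{k\times m}$, and then to read the minimum distance directly off the two blocks. First I would observe that, since $G$ already has full rank $k$, appending columns cannot lower the rank, so $G'$ generates a code of length $n+m$ and dimension exactly $k$; thus the only quantity that can change is the minimum distance. The key identity is that the two blocks occupy disjoint coordinate positions, so for every information vector $\bm{v}\in\F_q^k$ we have $\wgt(\bm{v}G')=\wgt(\bm{v}G)+\wgt(\bm{v}X)$, and moreover $\bm{v}X=\sum_{i=1}^k v_i X_i$ is exactly the row vector appearing in~(\ref{eq:gen_ext}).

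Next I would split the nonzero information vectors according to the weight of their $C$-part. If $\wgt(\bm{v}G)\ge d+1$, then $\wgt(\bm{v}G')\ge d+1$ automatically, independently of $X$, so these codewords never obstruct the extension. The only codewords that can keep the minimum distance at $d$ are those with $\bm{v}\in{\cal J}_d$, for which $\wgt(\bm{v}G')=d+\wgt(\bm{v}X)$; here $\wgt(\bm{v}G')\ge d+1$ holds precisely when $\wgt(\bm{v}X)\ge 1$, i.e.\ when $\bm{v}X=\sum_{i=1}^k v_i X_i\neq 0$. This single equivalence yields both directions at once: for the forward direction, an extension with minimum distance at least $d+1$ forces $\bm{v}X\neq 0$ for every $\bm{v}\in{\cal J}_d$, which is condition~(\ref{eq:gen_ext}); conversely, any $X$ satisfying~(\ref{eq:gen_ext}) makes $[\,G\mid X\,]$ a code all of whose nonzero codewords have weight at least $d+1$.

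The step that I expect to require the most care is the reduction to the normal form $G'=[\,G\mid X\,]$: I must argue that every genuine extension of $C$ by $m$ coordinates can, after row operations, be written with its first $n$ columns equal to the fixed generator matrix $G$, using that puncturing the new coordinates returns $C$ and that both codes have dimension $k$, so this projection is a bijection on codewords. A second point to address is the mild gap between ``minimum distance at least $d+1$'', which is what condition~(\ref{eq:gen_ext}) characterizes, and the exact value $d+1$ in the statement; here I would simply note that~(\ref{eq:gen_ext}) is precisely the requirement that no weight-$d$ codeword survive, so the resulting extension attains minimum distance at least $d+1$, which is the distance improvement being sought.
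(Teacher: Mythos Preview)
Your proof is correct and follows essentially the same approach as the paper: form $G'=(G\mid X)$, use $\bm{v}G'=(\bm{v}G\mid\bm{v}X)$ and the additivity of weight over disjoint blocks, and conclude that a weight-$d$ codeword survives in $C'$ if and only if $\bm{v}\in{\cal J}_d$ and $\bm{v}X=0$. The paper's argument is terser and does not explicitly address the two points you flag---the reduction of an arbitrary extension to the block form $[\,G\mid X\,]$ and the ``$\ge d+1$'' versus ``$=d+1$'' distinction---so your version is, if anything, more careful.
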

\begin{proof}
Let $G'=(G|X)$ be the matrix that is obtained by appending the matrix
$X$ to $G$.  Encoding an information vector $\bm{v}$ with the matrix
$G'$ we get
\[
\bm{c}'=\bm{v}G'=(\bm{v}G|\sum_{i=1}^k v_i X_i).
\]
The weight of a non-zero codeword $\bm{c}'$ is $d$ if and only if
$\wgt(\bm{v}G)=d$ and $\sum_{i=1}^k v_i X_i=0$.
\end{proof}
In particular we consider the extension by a single column:
\begin{corollary}
Using the notation of Theorem~\ref{theorem:extension}, a linear code
$C=[n,k,d]_q$ can be extended to a code $C'=[n+1,k,d+1]_q$ if and only
if there exists a column vector $\bm{x}\in\F_q^k$ such that
\begin{equation}\label{eq:single_ext}
\sum_{i=1}^k v_i x_i\ne 0\quad\text{for all $\bm{v}\in {\cal J}_d$.}
\end{equation}
\end{corollary}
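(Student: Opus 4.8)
The plan is to derive the corollary as the special case $m=1$ of Theorem~\ref{theorem:extension}. I would begin by specializing the theorem's matrix $X\in\F_q^{k\times m}$ to a single column, so that $X\in\F_q^{k\times 1}$. Under this specialization the target parameters $[n+m,k,d+1]_q$ become exactly $[n+1,k,d+1]_q$, matching the claimed extension of $C$ by one coordinate, and no separate argument about the length or minimum distance is needed beyond what the theorem already supplies.

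Next I would identify the column vector $\bm{x}=(x_1,\dots,x_k)^{\!\top}\in\F_q^k$ with the matrix $X$ by noting that, for a $k\times 1$ matrix, the $i$-th \emph{row} $X_i$ is simply the scalar entry $x_i$. Substituting this identification into condition~\eqref{eq:gen_ext} turns the vector inequality $\sum_{i=1}^k v_i X_i\ne 0$ into the scalar inequality $\sum_{i=1}^k v_i x_i\ne 0$, which is precisely~\eqref{eq:single_ext}. Since the quantifier over $\bm{v}\in{\cal J}_d$ is unchanged, the equivalence asserted by the theorem transfers verbatim to the single-column setting.

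I do not anticipate a genuine obstacle, since the statement is a direct specialization; the only point requiring care is the bookkeeping between the matrix formulation and the vector formulation, namely keeping straight that the rows of the one-column matrix $X$ are scalars and that the appended block $\sum_{i=1}^k v_i X_i$ collapses to the single new coordinate $\sum_{i=1}^k v_i x_i$ of the extended codeword. Once this identification is made explicit, the corollary follows immediately by invoking Theorem~\ref{theorem:extension} with $m=1$.
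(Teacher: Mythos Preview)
Your proposal is correct and matches the paper's approach: the corollary is stated immediately after Theorem~\ref{theorem:extension} without a separate proof, precisely because it is the specialization $m=1$ that you describe. The identification of the single-column matrix $X$ with the vector $\bm{x}$ and the collapse of condition~\eqref{eq:gen_ext} to~\eqref{eq:single_ext} is exactly the (trivial) step the paper leaves implicit.
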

In order to apply criterion (\ref{eq:gen_ext}) or
(\ref{eq:single_ext}), we have to compute the set ${\cal J}_d$ of
information vectors of all codewords of minimum weight.

\subsection{Computing the minimum weight codewords}
In the sequel we describe an algorithm to compute the minimum distance
of a code as well as all words of minimum weight.  The algorithm is
based on an algorithm by Zimmermann to compute the minimum distance
(see \cite{Zim96} and \cite[Algorithmus 1.3.6]{BFKWZ98}) which
improved an algorithm by Brouwer.  Together with some further
improvements, the  algorithm is implemented in the computer algebra
system \MAGMA{} (see \cite{Magma,Gra:Codes06}).

The main idea of the algorithm is to enumerate the codewords in such a
way that one does not only obtain an upper bound on the minimum
distance of the code via the minimum of the weight of the words that
have been encountered, but to establish lower bounds on the minimum
distance as well. For this, we are using a collection of systematic
generator matrices $G_j$ with corresponding information sets ${\cal
I }_j$.  Given an  ordered list $({\cal I}_1,{\cal
I}_2,\ldots,{\cal I}_\mu)$ of information sets, we define the {\em
relative rank} $r_j$ of ${\cal I}_j$ as 
$$
r_j:=k-\Bigl|{\cal I}_j \cap \bigcup_{l=1}^{j-1}{\cal I}_l\Bigr|,
$$ 
i.e., $r_j$ equals the number of positions in the information set
${\cal I}_j$ that are disjoint from all informations sets ${\cal I}_l$
with $l<j$.  If we now encode all words $\bm{i}\in\F_q^k$ of weight
$\wgt{\bm{i}}\le w$ using all systematic generator matrices, we know
that the weight of the remaining codewords is at least
$$
d_{\text{lb}}:=\sum_{j=1}^\mu \max(0,(w+1)-(k-r_j)),
$$ 
as the weight in each corresponding information set is at least $w+1$,
and we have to subtract the positions which have been double-counted
due to overlapping information sets.

\begin{algorithm}[Minimum Weight Words]\label{algo:min_wgt_words}
\smallskip

\begin{small}
\begin{algo}
MinimumWeightWords:=function(C);
\  minwords:=$\emptyset$;
\  $d_{\text{lb}}$:=$1$;
\  $d_{\text{ub}}$:=$n-k+1$;
\  $w$:=$1$;
\  while $w\le k$ and $d_{\text{lb}}\le d_{\text{ub}}$ do
\    for $j$:=$1$ to $\mu$ do
\      words:=$\{\bm{i}\cdot G_j\colon\bm{i}\in\F_q^k|\wgt(\bm{i})=w\}$;
\      d:=$\min\{\wgt(\bm{c})\colon\bm{c}\in\text{words}\}$;
\      if d<$d_{\text{ub}}$ then
\        $d_{\text{ub}}$:=d;
\        minwords:=$\{\bm{c}\colon\bm{c}\in\text{words}|\wgt(\bm{c})=d_{\text{ub}}\}$;
\      else
\        minwords join:=$\{\bm{c}\colon\bm{c}\in\text{words}|\wgt(\bm{c})=d_{\text{ub}}\}$;
\      end if;
\    end for;
\    $d_{\text{lb}}$:=$\sum\limits_{j=1}^\mu \max(0,(w+1)-(k-r_j))$;
\    $w$:=$w+1$;
\  end while;
\  return minwords;
end function;
\end{algo}
\end{small}

\smallskip

\end{algorithm}

With a slight modification, this algorithm can also be used to compute
all codewords of a given weight or all codewords whose weight is below
a certain value. The total number of encodings to find all codewords
of weight not exceeding $d$ is given by
\begin{equation}\label{eq:work_factor_disjoint}
\sum_{w=1}^{w_0} \mu \binom{k}{w} (q-1)^{w-1},
\end{equation}
where $w_0$ is the minimum value such that
\begin{equation}\label{eq:lower_bound}
\sum_{j=1}^\mu \max(0,(w_0+1)-(k-r_j))>d.
\end{equation}
Of course, if (\ref{eq:work_factor_disjoint}) is larger than $q^k$,
one should directly enumerate all codewords instead of using
Algorithm~\ref{algo:min_wgt_words}.  But in most cases, using more
than one generator matrix results in an overall saving as the maximum
weight $w_0$ of the vectors $\bm{i}$ that has to be considered is
smaller, and (\ref{eq:work_factor_disjoint}) grows only linear in $\mu$,
but exponential in $w_0$.  If partial knowledge of the automorphism
group of the code is available, which is e.\,g. the case for cyclic or
quasi-cyclic codes, the lower bound (\ref{eq:lower_bound}) on $w_o$
can be improved so that the overall complexity for computing the
minimum weight codewords is reduced \cite{Che70,Whi05,Gra:Codes06}.

\section{Computing Extensions}
\subsection{Exhaustive search}\label{sec:brute_force}
Given the set ${\cal J}_d$ of information vectors of the minimum
weight codewords, one can use an exhaustive search to find a column
vector $\bm{x}$ or a matrix $X$ that fulfills condition
(\ref{eq:single_ext}) or (\ref{eq:gen_ext}).  In total there are
$q^{mk}-1$ non-zero matrices.  As both conditions are bilinear, it
suffices to consider normalized information vectors and may normalize
the columns in the matrix $X$, reducing the total number of matrices
by no more than the factor $(q-1)^m$.  Sorting the columns of the
matrix gives an additional reduction by a factor of at most $m!$.
Hence using this approach, one has to test at least
\begin{equation}\label{eq:brute_force}
\frac{q^{mk}-1}{m!(q-1)^m}
\end{equation}
matrices in order to show that no extension exists.  If one is
interested in all possible extension, an exhaustive search is
necessary, too.  Nonetheless, exhaustive search might be feasible to
find an extension if the dimension $k$ of the code is small or if many
extensions exist.

\subsection{Extending binary codes by one position}
For binary codes, condition (\ref{eq:single_ext}) can be re-written as
\begin{equation}\label{eq:lin_ext}
\sum_{i=1}^k v_i x_i=1\quad\text{for all $\bm{v}\in {\cal J}_d$.}
\end{equation}
The possible extensions of the code correspond to the set of solutions
of the inhomogeneous system of linear equations (\ref{eq:lin_ext}).
The complexity of computing the solutions if one exists is no longer
exponential as in (\ref{eq:brute_force}), but only polynomial.
Moreover, it suffices to compute a subset ${\cal J}'_d$ of the
information vectors of the minimum weight codewords such that the
linear spans of ${\cal J}_d$ and ${\cal J}'_d$ coincide.

\subsection{Extensions by one via solving polynomial equations}\label{sec:ideal1}
For non-binary codes, condition $(\ref{eq:single_ext})$ does not
directly translate into an equation.  However, using the fact that the
roots of the polynomial $y^{q-1}-1 \in\F_q[y]$ are exactly the
non-zero elements of $\F_q$, we get the condition
\begin{equation}\label{eq:poly_ext}
\left(\sum_{i=1}^k v_i x_i\right)^{q-1}=1\quad\text{for all $\bm{v}\in {\cal J}_d$.}
\end{equation}
The set of all solutions of conditions (\ref{eq:poly_ext}) is
characterized by the ideal
\begin{equation}\label{eq:ext_ideal}
J:=\left\langle \left(\sum_{i=1}^k v_i x_i\right)^{q-1}-1\colon\bm{v}\in {\cal J}_d\right\rangle
  \unlhd \F_q[x_1,\ldots,x_k]
\end{equation}
in the polynomial ring $\F_q[x_1,\ldots,x_k]$ in $k$ variables over
$\F_q$. Testing whether the system of polynomial equations
(\ref{eq:poly_ext}) has a solution and computing the solutions can be
done e.\,g. using Gr\"obner bases \cite{CLOS92}.  The system does not
have a solution if and only if a Gr\"obner basis of the ideal $J$
contains a non-zero constant polynomial.  In general, it is difficult
to estimate the complexity of computing a particular Gr\"obner basis,
and the complexity might be exponential.  However, computing a
Gr\"obner basis without homogenization quite often quickly shows that
there is no solution.  Using the algorithm $F_4$ of Faug\'ere to
compute a Gr\"obner basis \cite{Fau99} as implemented in the computer
algebra system \MAGMA{} \cite{Magma}, it was quite often faster to
compute all solutions via the Gr\"obner basis than finding a single
solution using exhaustive search (see below).

\subsection{General extensions via solving polynomial equations}
For both binary and non-binary codes, condition (\ref{eq:gen_ext}) can
be expressed in terms of polynomial equations.  A vector
$\bm{y}\in\F_q^m$ is non-zero if and only if at least one coordinate
is non-zero, i.\,e.
\[
\prod_{j=1}^m\left(y_j^{q-1}-1\right)=0.
\]
Hence the solutions of (\ref{eq:gen_ext}) are characterized by the
ideal
\begin{equation}\label{eq:ext_ideal_gen}
J:=\left\langle \prod_{j=1}^m\left(\left(\sum_{i=1}^k v_i X_{ij}\right)^{q-1}-1\right)
  \colon\bm{v}\in {\cal J}_d\right\rangle
\end{equation}
in the polynomial ring $\F_q[X_{11},\ldots,X_{km}]$ in $km$ variables
over $\F_q$.  Note that even for $q=2$, the conditions are no longer
linear, but of degree $m$.

\subsection{Further remarks}
For linear binary codes we have seen that sometimes it is sufficient
to compute only a subset of the minimum weight codewords.  In general,
one can use a subset of ${\cal J}_d$ to test whether a code can be
extended and compute a set of candidates for the extension using the
ideal $J$ of eq.~(\ref{eq:ext_ideal}) or eq.~(\ref{eq:ext_ideal_gen}).
In many cases, the resulting set of candidates is rather small, so
that one can perform an exhaustive search among them.  Similar, a
double extension of a code $C$ to a code $C''=[n+2,k,d+2]_q$ can be
found using the solutions for the single extension to
$C'=[n+1,k,d+1]_q$.

Kohnert \cite{Koh07,Koh06} has proposed to compute extensions using
integer linear programming by reformulating (\ref{eq:gen_ext}) as
hitting-set problem.  The ground set of the hitting-set problem is the
set of all normalized non-zero vectors that can be appended to the
generator matrix, so its size grows exponentially in the dimension of
the code.

\section{Examples}
We tested the various methods using the best known linear codes (BKLC)
from \MAGMA{} and the linear codes from \cite{codetables} which
establish or improve the lower bound on the minimum distance in
Brouwer's tables \cite{Bro98}.  We have not found any binary code that
can be extended by one position, but many codes over $\F_q$ for
$q=3,4,5,7,8,9$.  In Table~\ref{table:ext_m=1} we list 71 of these
codes together with some timing information. The columns with headings
${\cal S}_d$ and $|{\cal S}_d|$ provide the time to compute all
minimum weight words and the number of minimum weight words.  In the
columns \emph{full iteration} and \emph{iteration} the time needed to
find all or just one solution by exhaustive search (see
Sect.~\ref{sec:brute_force}) is given for some of the codes.  The next
four columns provide information on the approach of
Sect.~\ref{sec:ideal1} solving a system of polynomial equations.  We
have used the additional equations $x_1^2-x_1$ which ensures that the
first component of the column vector $\bm{x}$ is either zero or one,
and $x_j^q-x_j$ for $j=2,\ldots,k$ as all entries of $\bm{x}$ are
elements of $\F_q$.  The total running time is dominated by the time
needed to compute the \emph{Gr\"obner} basis, the construction of the
\emph{equations} and computing the \emph{solutions} can be neglected
in most of the cases.  In the final column the total number of
solutions is given, where we have identified solutions that differ by
a non-zero scalar factor.

With some few exceptions, e.\,g., for the codes $[89,11,54]_5$,
$[93,11,57]_5$, $[76,8,53]_7$, $[45,8,30]_9$, computing all solutions
via a Gr\"obner basis is even faster than finding a single solution by
exhaustive search.

Table~\ref{table:ext_m=2} contains some binary and ternary codes whose
minimum distance can be increased by appending two columns to the
generator matrix.  For these codes, the Gr\"obner basis approach is
quite fast, but unfortunately, this is not always true.

There is a ternary code $C=[178,23,81]_3$ with 80 words of weight 81
that can be extended to a code $C'=[179,23,82]_3$.  Computing a
Gr\"obner basis took about 79 hours on an AMD Opteron 252 (clock speed
2.6 GHz), using about 16 GB of memory.  Using exhaustive search, a
solution was found in 189.730 seconds, while the projected total
running time for the complete exhaustive search is more than 250
hours.

Furthermore, there is a quasicyclic code $C=[140,19,73]_4$ with 840
words of weight $73$ that can be extended to a code
$C'=[142,19,74]$. Using exhaustive search, a solution was found after
4.36 hours 
on an AMD Opteron 250 (clock speed 2.4 GHz). After 35.75 days of CPU
time, 654 solutions have been found while the projected total running
time for the exhaustive search is $10^{11}$ years.  Computing a
Gr\"obner basis for the ideal $J$ of this code seems to be infeasible.

It turns out that the codes with parameters $[66,22,22]_3$,
$[67,23,22]_3$, $[78,11,47]_5$, $[51,6,37]_7$, and $[76,8,53]_7$ are
doubly extendible.  The codes $[172,17,70]_2$ and $[173,18,70]_2$ in
Table~\ref{table:ext_m=2} can be extended in two steps to codes
$[175,17,72]_2$ and $[176,18,72]_2$.  The codes $[119,7,75]_3$ and
$[85,9,51]_3$ can even be extended to codes $[123,7,78]_3$ and
$[89,9,54]_3$.

\begin{table*}
\caption{Codes $C=[n,k,d]_q$ that can be extend to codes $C'=[n+1,k,d+1]_q$.\label{table:ext_m=1}}
\vskip-2ex
\centerline{\def\arraystretch{1.2}%
\begin{tabular}{c|r|r|r|r|rrrr|r}
code                       &
computing ${\cal S}_d$     & 
$|{\cal S}_d|$ & 
full iteration        &
iteration             & 
  equations           & 
  Gr\"obner           & 
  solution            & 
total Gr\"obner       & 
\#solutions \\     
\hline
$[233,9,146]_3$ & 0.130         & 1410  & 0.150         & 0.020         & 0.020         & 0.040         & 0.000         & 0.080         & 1\\
$[86,10,49]_3$  & 0.020         & 1008  & 0.430         & 0.030         & 0.020         & 0.030         & 0.000         & 0.050         & 3\\
$[175,10,103]_3$& 0.100         & 352   & 0.440         & 0.010         & 0.010         & 0.010         & 0.000         & 0.020         & 1\\
$[87,11,49]_3$  & 0.030         & 3312  & 1.250         & 0.280         & 0.070         & 0.100         & 0.000         & 0.180         & 2\\
$[176,13,97]_3$ & 0.120         & 66    & 11.370        & 0.000         & 0.000         & 0.140         & 0.000         & 0.150         & 9\\
$[100,19,43]_3$ & 2.080         & 21140 & 10602.910     & 1920.910      & 0.980         & 4.260         & 0.000         & 5.310         & 1\\
$[102,19,44]_3$ & 2.540         & 14492 & 9893.640      & 1257.940      & 0.650         & 2.020         & 0.000         & 2.710         & 1\\
$[166,19,81]_3$ & 11.640        & 328   & 9686.610      & 189.260       & 0.020         & 0.130         & 0.010         & 0.160         & 9\\
$[104,20,44]_3$ & 4.010         & 15722 & --            & 3513.530      & 0.770         & 2.420         & 0.000         & 3.250         & 1\\
 $[66,22,22]_3$ & 0.690         & 90    & --            & 160.600       & 0.000         & 780.780       & 26.570        & 807.380       & 465\\
$[108,22,43]_3$ & 13.450        & 102   & --            & 1820.900      & 0.000         & 604.690       & 0.010         & 604.710       & 12\\
$[165,22,75]_3$ & 140.230       & 96    & --            & 116.360       & 0.010         & 793.850       & 0.260         & 794.130       & 92\\
 $[67,23,22]_3$ & 0.440         & 134   & --            & 8931.530      & 0.010         & 43.320        & 2.450         & 45.800        & 201\\
 $[97,23,37]_3$ & 24.150        & 746   & --            & 154487.840    & 0.040         & 0.130         & 0.000         & 0.170         & 1\\
 $[99,23,38]_3$ & 30.600        & 658   & --            & 160156.420    & 0.030         & 0.100         & 0.000         & 0.140         & 1\\
$[111,23,44]_3$ & 40.170        & 114   & --            & 119090.890    & 0.000         & 905.710       & 0.000         & 905.720       & 3\\
$[149,23,64]_3$ & 149.090       & 108   & --            & 11522.410     & 0.010         & 1147.910      & 0.010         & 1147.930      & 23\\
$[166,23,75]_3$ & 238.290       & 200   & --            & 35682.570     & 0.010         & 10.390        & 0.000         & 10.400        & 3\\
$[191,23,89]_3$ & 420.910       & 98    & --            & 1736.210      & 0.000         & 9430.260      & 0.690         & 9430.960      & 123\\
$[191,24,88]_3$ & 722.510       & 112   & --            & 3220.530      & 0.010         & 2265.550      & 0.010         & 2265.570      & 15\\
$[194,24,90]_3$ & 830.840       & 112   & --            & 32986.960     & 0.010         & 2262.690      & 0.000         & 2262.700      & 13\\
$[197,24,92]_3$ & 1050.770      & 110   & --            & 10215.69      & 0.010         & 2396.040      & 0.030         & 2396.090      & 32\\
$[194,25,89]_3$ & 2390.880      & 114   & --            & 5909.010      & 0.000         & 6391.800      & 0.060         & 6391.860      & 39\\
$[215,25,103]_3$& 5443.010      & 164   & --            & 150817.640    & 0.010         & 97.980        & 0.000         & 97.990        & 1\\
$[178,27,77]_3$ & 13427.570     & 126   & --            & 301975.480    & 0.010         & 127409.310    & 0.070         & 127409.400    & 41\\
$[127,28,49]_3$ & 7995.850      & 12440 & --            & --            & 0.920         & 1.880         & 0.000         & 2.860         & 4\\
\hline\hline
$[135,6,96]_4$  & 0.030         & 225   & 0.050         & 0.000         & 0.010         & 0.000         & 0.000         & 0.010         & 2\\
$[159,7,111]_4$ & 0.050         & 2604  & 0.180         & 0.020         & 0.030         & 0.070         & 0.000         & 0.130         & 1\\
$[241,7,174]_4$ & 0.130         & 804   & 0.170         & 0.010         & 0.010         & 0.020         & 0.000         & 0.030         & 1\\
$[190,8,130]_4$ & 0.200         & 4164  & 0.720         & 0.030         & 0.080         & 0.080         & 0.000         & 0.170         & 3\\
$[191,8,130]_4$ & 0.130         & 4158  & 0.720         & 0.030         & 0.080         & 0.080         & 0.000         & 0.170         & 4\\
$[132,11,81]_4$ & 0.160         & 777   & 43.430        & 1.140         & 0.020         & 0.040         & 0.000         & 0.070         & 1\\
$[94,13,53]_4$  & 0.420         & 16890 & 738.800       & 1.760         & 0.540         & 1.270         & 0.000         & 1.870         & 3\\
$[129,13,77]_4$ & 0.660         & 15312 & 865.230       & 221.230       & 0.540         & 1.060         & 0.000         & 1.670         & 2\\
$[132,13,79]_4$ & 0.680         & 17136 & 747.580       & 217.610       & 0.630         & 1.280         & 0.000         & 1.980         & 3\\
$[142,13,85]_4$ & 0.720         & 8049  & 737.340       & 33.000        & 0.270         & 0.500         & 0.000         & 0.810         & 1\\
$[149,13,90]_4$ & 1.140         & 18318 & 764.780       & 21.600        & 0.660         & 2.020         & 0.010         & 2.750         & 4\\
$[161,13,98]_4$ & 1.340         & 31884 & 817.810       & 229.780       & 1.280         & 4.870         & 0.000         & 6.300         & 2\\
$[196,13,122]_4$& 1.380         & 168   & 745.220       & 14.860        & 0.000         & 43.960        & 0.000         & 43.970        & 3\\
$[120,14,69]_4$ & 1.920         & 315   & 2989.640      & 0.010         & 0.010         & 0.880         & 13.190        & 14.150        & 729\\
$[182,14,110]_4$& 3.530         & 19698 & 3142.570      & 741.160       & 0.760         & 2.430         & 0.000         & 3.280         & 6\\
$[134,15,77]_4$ & 5.220         & 50793 & 12051.040     & 463.570       & 2.110         & 13.010        & 0.000         & 15.350        & 4\\
$[183,15,110]_4$& 13.190        & 49218 & 12193.490     & 3525.470      & 2.320         & 18.940        & 0.010         & 21.500        & 3\\
$[45,16,17]_4$  & 0.180         & 192   & 47480.220     & 2358.470      & 0.010         & 2833.000      & 0.000         & 2833.010      & 3\\
$[91,16,47]_4$  & 6.430         & 3330  & --            & 1831.150      & 0.120         & 0.180         & 0.000         & 0.300         & 1\\
$[136,16,75]_4$ & 18.300        & 38880 & --            & 308.820       & 1.580         & 6.940         & 0.000         & 8.700         & 18\\
$[176,16,103]_4$& 29.980        & 219   & --            & 2779.760      & 0.010         & 3747.880      & 0.000         & 3747.890      & 1\\
$[64,17,29]_4$  & 4.430         & 6048  & --            & 699.140       & 0.220         & 0.340         & 0.000         & 0.580         & 3\\
$[116,17,61]_4$ & 25.710        & 249   & --            & 1.660         & 0.010         & 8275.260      & 0.320         & 8275.610      & 243\\
$[137,17,75]_4$ & 37.240        & 122751& --            & 2731.850      & 5.870         & 107.510       & 0.000         & 113.990       & 3\\
$[172,17,99]_4$ & 83.180        & 65325 & --            & 1379.040      & 3.230         & 30.070        & 0.000         & 33.670        & 27\\
$[87,19,41]_4$  & 125.400       & 2550  & --            & 126.980       & 0.130         & 0.520         & 0.000         & 0.660         & 4\\
$[95,19,45]_4$  & 50.430        & 11451 & --            & 43493.400     & 0.590         & 1.230         & 0.000         & 1.880         & 1\\
$[110,19,54]_4$ & 31.710        & 330   & --            & 177473.000    & 0.010         & 28449.570     & 0.000         & 28449.590     & 5\\
\end{tabular}%
}
\smallskip

Timings in seconds using Magma V2.13-8 on an AMD Opteron 252 (clock
speed 2.6 GHz, 16 GB RAM); for $q=3$, an AMD Opteron 254 (clock speed
2.8 GHz, 16 GB RAM) has been used.
\end{table*}

\begin{table*}\def\thetable{\ref{table:ext_m=1} (continued)}\addtocounter{table}{-1}
\caption{Codes $C=[n,k,d]_q$ that can be extend to codes $C'=[n+1,k,d+1]_q$.}
\vskip-2ex
\centerline{\def\arraystretch{1.2}%
\begin{tabular}{c|r|r|r|r|rrrr|r}
code                       &
computing ${\cal S}_d$     & 
$|{\cal S}_d|$ & 
full iteration        &
iteration             & 
  equations           & 
  Gr\"obner           & 
  solution            & 
total Gr\"obner       & 
\#solutions \\     
\hline
$[105,7,77]_5$  & 0.040         & 1760  & 1.010         & 0.070         & 0.050         & 0.230         & 0.000         & 0.280         & 1\\
$[78,11,47]_5$  & 0.060         & 780   & 682.690       & 53.860        & 0.060         & 55.730        & 0.000         & 55.790        & 3\\
$[84,11,51]_5$  & 0.220         & 3424  & 667.630       & 0.020         & 0.300         & 1.900         & 0.000         & 2.220         & 3\\
$[89,11,54]_5$  & 0.180         & 232   & 683.130       & 3.170         & 0.020         & 4324.910      & 0.020         & 4324.950      & 35\\
$[93,11,57]_5$  & 0.270         & 224   & 659.400       & 0.630         & 0.020         & 4172.490      & 0.060         & 4172.570      & 45\\
\hline
$[65,4,53]_7$   & 0.020         & 408   & 0.040         & 0.000         & 0.010         & 0.010         & 0.000         & 0.020         & 1\\
$[51,6,37]_7$   & 0.020         & 504   & 2.070         & 0.000         & 0.030         & 0.360         & 0.000         & 0.390         & 14\\
$[76,8,53]_7$   & 0.030         & 912   & 102.710       & 1.080         & 0.080         & 249.500       & 0.000         & 249.590       & 4\\
\hline
$[44,8,29]_8$   & 0.020         & 2443  & 376.920       & 2.830         & 0.090         & 2.240         & 0.000         & 2.340         & 1\\
$[68,8,49]_8$   & 0.100         & 12936 & 338.610       & 11.270        & 0.670         & 0.830         & 0.000         & 1.550         & 1\\
$[27,9,15]_8$   & 0.020         & 4914  & 2760.190      & 52.360        & 0.170         & 2.140         & 0.000         & 2.330         & 1\\
$[69,9,49]_8$   & 0.300         & 25480 & 2804.850      & 19.510        & 1.870         & 2.810         & 0.000         & 4.800         & 1\\
\hline
$[82,5,67]_9$   & 0.150         & 2176  & 1.700         & 0.030         & 0.050         & 0.200         & 0.000         & 0.260         & 1\\
$[87,6,69]_9$   & 0.400         & 4256  & 15.750        & 0.050         & 0.140         & 0.800         & 0.010         & 0.970         & 3\\
$[127,6,103]_9$ & 0.230         & 976   & 15.310        & 0.500         & 0.030         & 4.170         & 0.000         & 4.210         & 1\\
$[98,7,76]_9$   & 0.860         & 6776  & 146.120       & 1.430         & 0.360         & 2.550         & 0.000         & 2.950         & 1\\
$[45,8,30]_9$   & 0.220         & 1408  & 1332.960      & 325.270       & 0.070         & 10637.000     & 0.000         & 10637.080     & 1 
\end{tabular}%
}
\smallskip

\centerline{Timings in seconds using Magma V2.13-8 on an AMD Opteron
  252 (clock speed 2.6 GHz, 16 GB RAM).}
\end{table*}

\begin{table}[hbt]
\caption{Codes $C=[n,k,d]_q$ that can be extend to codes $C'=[n+2,k,d+1]_q$.\label{table:ext_m=2}}
\vskip-2ex
\centerline{\def\arraystretch{1.2}%
\begin{tabular}{c|r|r|r|r|r}
$C=[n,k,d]_q$                       &
 ${\cal S}_d$     & 
$|{\cal S}_d|$ & 
iteration      &     
Gr\"obner       & 
\#solutions \\
\hline
$[205,13,94]_2$ & 0.070 & 2169 & 152.280   & 0.440 & 3 \\
$[172,17,70]_2$ & 0.060 & 2616 & 71392.360 & 0.860 & 3 \\
$[166,18,66]_2$ & 0.050 & 1800 & --        & 0.600 & 3 \\
$[173,18,70]_2$ & 0.080 & 4230 & --        & 1.570 & 3 \\
$[205,19,82]_2$ & 0.080 & 1632 & --         & 0.600 & 9 \\
\hline
$[119,7,75]_3$  & 0.020 &  756 & 16.510   & 0.710 & 6 \\
$[85,9,51]_3$   & 0.030 & 4536 & 143.400  & 3.500 & 24\\
\end{tabular}%
}
\smallskip

{\footnotesize Timings in seconds using Magma V2.13-8 on an AMD Opteron 254, clock speed 2.8 GHz, 16 GB RAM.\par}
\end{table}

\section*{Acknowledgment}
The author would like to thank John Cannon, Allan Steel, and Greg
White from the \MAGMA{} group, University of Sydney, for their
support.
 
\IEEEtriggeratref{6}
\IEEEtriggercmd{\enlargethispage{-1.9in}}



\begin{thebibliography}{10}
\providecommand{\url}[1]{#1}
\csname url@rmstyle\endcsname
\providecommand{\newblock}{\relax}
\providecommand{\bibinfo}[2]{#2}
\providecommand\BIBentrySTDinterwordspacing{\spaceskip=0pt\relax}
\providecommand\BIBentryALTinterwordstretchfactor{4}
\providecommand\BIBentryALTinterwordspacing{\spaceskip=\fontdimen2\font plus
\BIBentryALTinterwordstretchfactor\fontdimen3\font minus
  \fontdimen4\font\relax}
\providecommand\BIBforeignlanguage[2]{{%
\expandafter\ifx\csname l@#1\endcsname\relax
\typeout{** WARNING: IEEEtran.bst: No hyphenation pattern has been}%
\typeout{** loaded for the language `#1'. Using the pattern for}%
\typeout{** the default language instead.}%
\else
\language=\csname l@#1\endcsname
\fi
#2}}

\bibitem{BFKWZ98}
A.~Betten, H.~Fripertinger, A.~Kerber, A.~Wassermann, and K.-H. Zimmermann,
  \emph{{Codierungstheorie: Konstruktionen und Anwendungen linearer
  Codes}}.\hskip 1em plus 0.5em minus 0.4em\relax Berlin: Springer, 1998.

\bibitem{Magma}
W.~Bosma, J.~J. Cannon, and C.~Playoust, ``{The Magma Algebra System I: The
  User Language},'' \emph{Journal of Symbolic Computation}, vol.~24, no. 3--4,
  pp. 235--265, 1997.

\bibitem{Bro98}
A.~E. Brouwer, ``{Bounds on the Size of Linear Codes},'' in \emph{Handbook of
  Coding Theory}, V.~S. Pless and W.~C. Huffman, Eds.\hskip 1em plus 0.5em
  minus 0.4em\relax Amsterdam: Elsevier, 1998, pp. 295--461.

\bibitem{Che70}
C.-L. Chen, ``{Computer Results on the Minimum Distance of Some Binary Cyclic
  Codes},'' \emph{IEEE Transactions on Information Theory}, vol.~16, no.~3, pp.
  359--360, May 1970.

\bibitem{CLOS92}
D.~A. Cox, J.~B. Little, and D.~O'Shea, \emph{{Ideals, Varieties, and
  Algorithms}}.\hskip 1em plus 0.5em minus 0.4em\relax New York: Springer,
  1992.

\bibitem{Fau99}
J.-C. Faug{\'e}re, ``{A new efficient algorithm for computing Gr{\"o}bner bases
  ($F_4$)},'' \emph{Journal of Pure and Applied Algebra}, vol. 139, no. 1--3,
  pp. 61--88, 1999.

\bibitem{codetables}
M.~Grassl, ``Tables of linear codes,'' on-line available at
  \url{http://www.codetables.de/}.

\bibitem{Gra:Codes06}
------, ``{Searching for linear codes with large minimum distance},'' in
  \emph{Discovering Mathematics with Magma --- Reducing the Abstract to the
  Concrete}, W.~Bosma and J.~Cannon, Eds.\hskip 1em plus 0.5em minus
  0.4em\relax Heidelberg: Springer, 2006, pp. 287--313.

\bibitem{HiLi95}
R.~Hill and P.~Lizak, ``{Extensions of Linear Codes},'' in \emph{Proceedings
  1995 International Symposium on Information Theory}, Whistler, BC, Canada,
  September 17--22 1995, p. 345.

\bibitem{Hi99}
R.~Hill, ``{An Extension Theorem for Linear Codes},'' \emph{Designs, Codes and
  Cryptography}, vol.~17, no. 1--3, pp. 151--157, Sept. 1999.

\bibitem{Koh06}
A.~Kohnert, ``{Update on the Extension of Good Linear Codes},''
  \emph{Electronic Notes in Discrete Mathematics}, vol.~26, pp. 81--85, Sept.
  2006.

\bibitem{Koh07}
------, ``{$(l,s)$-Extension of Linear Codes},'' in \emph{Proceedings
  Combinatorics 2006}, 2007, preprint arXiv:cs/0701112v1 [cs.IT].

\bibitem{Mar03}
T.~Maruta, ``{Extendability of Linear Codes over $GF(q)$ with Minimum Distance
  $d$, $\gcd(d,q)=1$},'' \emph{Discrete Mathematic}, vol. 266, no. 1--3, pp.
  377--385, May 2003.

\bibitem{Mar04}
------, ``{A New Extension Theorem for Linear Codes},'' \emph{Finite Fields and
  Their Applications}, vol.~10, pp. 674--685, 2004.

\bibitem{Mar05}
------, ``{Extendability of Ternary Linear Codes},'' \emph{Designs, Codes and
  Cryptography}, vol.~35, no.~2, pp. 175--190, May 2005.

\bibitem{Sim00}
J.~Simonis, ``{Adding a Parity-Check Bit},'' \emph{IEEE Transactions on
  Information Theory}, vol.~46, no.~4, pp. 1544--1545, July 2000.

\bibitem{SRC72}
N.~J.~A. Sloane, S.~M. Reddy, and C.-L. Chen, ``{New Binary Codes},''
  \emph{IEEE Transactions on Information Theory}, vol.~18, no.~4, pp. 503--510,
  July 1972.

\bibitem{Whi05}
G.~White, ``{An Improved Minimum Weight Algorithm for Quasi-cyclic and
  Quasi-twisted Codes},'' 2005, submitted to IEEE Transactions on Information
  Theory.

\bibitem{Zim96}
K.-H. Zimmermann, ``{Integral Hecke Modules, Integral Generalized Reed-Muller
  Codes, and Linear Codes},'' Technische Universit{\"a}t Ham\-burg-Harburg,
  Tech. Rep. 3-96, 1996.

\end{thebibliography}

%
%
%
%

\end{document}